\documentclass[11pt,twoside]{article}
\usepackage{amssymb,amsmath,amsthm}
\usepackage{upgreek}
\usepackage{fullpage}
\newtheorem{lemma}{Lemma}
\bibliographystyle{plain}

\newcommand{\F}{\mathbb{F}}
\renewcommand{\O}{\mathcal{O}}

\title{A remark on MAKE -- a Matrix Action Key Exchange}
\author{Chris Monico\\
\small{Texas Tech University}\\
\small{c.monico@ttu.edu}
 \and Ayan Mahalanobis\\
 \small{IISER Pune}\\\small{ayanm@iiserpune.ac.in}}
 \date{}
\begin{document}
\maketitle

\section{Introduction}
Rahman and Shpilrain~\cite{sh} proposed a new key-exchange protocol MAKE based on external semidirect product of groups. The purpose of this paper is to show that the key exchange protocol is insecure. We were able to break their challenge problem in under a second.
\section{Description of MAKE} 
Let $\mathcal{G}$ and $\mathcal{H}$ be two
semigroups of $k\times k$ matrices over $\F_p$. 
The semigroup $\mathcal{G}$ is defined 
additively and $\mathcal{H}$ is defined multiplicatively. We define a 
semidirect $\mathcal{G}\ltimes\mathcal{H}$ such that 
$(G_1,H_1)\cdot(G_2,H_2)=(H_2G_1H_2 + G_2,H_1H_2)$, where 
$G_i\in\mathcal{G}$ and $H_i\in\mathcal{H}$ for $i=1,2$. 
For more on the description of the semidirect product used to make MAKE see~\cite{sh}.

Like all key exchange protocols, the purpose of MAKE is for Alice and Bob to set up 
a common key for secure communications over an insecure channel. In the case of MAKE
this was achieved in the following way:

\begin{enumerate}
  \item[(i)] Alice and Bob decide over an insecure channel that they are going to 
    use two invertible matrices $M$ and $H$ over $\F_p$ for some suitable 
    prime $p$. Here $M\in\mathcal{G}$ and $H\in\mathcal{H}$. 
  \item[(ii)] Alice chooses an integer $m$ and Bob an integer $n$. 
    These integers are private information.
  \item[(iii)] Alice computes $(M,H)^m=(A,H^m)$. 
    The operation is the product in the semidirect product defined above. 
    She sends $A$ to Bob but keeps $H^m$ secret.
  \item[(iv)] Bob computes $(M,H)^n=(B,H^n)$ and sends $B$ to Alice and keeps $H^n$ secret.
  \item[(v)] Alice on receiving $B$ computes the first component
    of $(B,Q)\cdot(A,H^m)=(H^mBH^m+A,QH^m)$. 
    The common key is $H^mBH^m+A$. Note that 
    $Q$ is neither known nor needed in this case and the key exchange is successful 
    without an explicit knowledge of $Q$.
  \item[(vi)] Bob on receiving $A$ computes the first component 
    of $(A,Q)\cdot(B,H^n)=(H^nAH^n+B,QH^n)$. 
    The key is $H^nAH^n+B$. Note as before $Q$ is not known.
\end{enumerate}

To check that this key exchange is successful one has to check if
\[H^mBH^m + A = H^nAH^n + B\] is true.
This follows from $A = H^{m-1}MH^{m-1} + H^{m-2}MH^{m-2} + \ldots + HMH + M$ and 
$B = H^{n-1}MH^{n-1} + H^{n-2} M H^{n-2} + \ldots + HMH + M$, and the common key is 
\[
  K = H^{m+n-1} M H^{m+n-1} + H^{m+n-2}MH^{m+n-2} + \ldots + HMH + M.
\]

\section{An attack on MAKE}
  Our attack is based on the following lemma, illustrating that
  recovery of the private parameters is not necessary for an eavesdropper
  to obtain the shared secret key.

\begin{lemma} \label{lem:RS}
  Let $M,H,A,B,m,n$ be as above.
  Suppose $R$ and $S$ are matrices which commute with $H$ and satisfy
  \[
    RMS = HAH + M  - A.
  \]
  Then Alice and Bob's shared secret key $K$ satisfies
  $RBS+A=K$.
\end{lemma}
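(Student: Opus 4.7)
The plan is to exploit the closed-form expressions for $A$, $B$, and $K$ given at the end of Section 2, together with the hypothesis that $R$ and $S$ commute with $H$. The first step is to rewrite the right-hand side of the equation $RMS = HAH + M - A$ in a much simpler form. Using $A = \sum_{i=0}^{m-1} H^i M H^i$, I would observe that $HAH = \sum_{i=1}^{m} H^i M H^i$, so that $HAH + M - A$ telescopes to the single term $H^m M H^m$. Thus the hypothesis reduces to $RMS = H^m M H^m$.

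Next, I would compute $RBS$ directly. Writing $B = \sum_{i=0}^{n-1} H^i M H^i$, and using that $R$ and $S$ commute with every power of $H$, I can slide $R$ to the right of $H^i$ and $S$ to the left of $H^i$ inside each summand, obtaining
\[
RBS \;=\; \sum_{i=0}^{n-1} H^i (RMS) H^i \;=\; \sum_{i=0}^{n-1} H^{m+i} M H^{m+i}.
\]

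Finally, I would add $A$ to this expression. The sum $\sum_{i=0}^{n-1} H^{m+i} M H^{m+i}$ runs over exponents $m,m+1,\ldots,m+n-1$, and $A$ contributes the exponents $0,1,\ldots,m-1$. Concatenating these two ranges yields $\sum_{j=0}^{m+n-1} H^j M H^j$, which is exactly the formula for $K$ displayed at the end of Section 2. Hence $RBS + A = K$.

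There is no real obstacle here; the only subtlety is recognizing the telescoping simplification $HAH + M - A = H^m M H^m$, which makes the role of the hypothesis transparent. Once this is in hand, the commutativity assumption does all the remaining work by letting $R$ and $S$ be transported past the powers of $H$ in the expansion of $B$.
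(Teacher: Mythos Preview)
Your argument is correct. It differs from the paper's proof only in where the telescoping is performed. The paper substitutes $RMS = HAH + M - A$ directly into $\sum_{i=0}^{n-1} H^i(RMS)H^i$, obtaining three sums, and then telescopes the two $A$-sums (together with the trailing $+A$) to arrive at $H^nAH^n + B$, which it recognizes as $K$. You instead telescope at the outset, simplifying the hypothesis itself to $RMS = H^m M H^m$, and then the computation of $RBS+A$ becomes a straightforward concatenation of index ranges. Your route has the advantage of making the content of the hypothesis transparent (it says precisely that $R,S$ mimic conjugation by $H^m$ on $M$); the paper's route has the mild advantage of not invoking the explicit formula for $A$ until the final identification with $K$. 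Both are short and rest on the same two ingredients: commuting $R,S$ past powers of $H$, and a telescoping sum.
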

\begin{proof}
  We simply compute
  \begin{eqnarray*}
    RBS + A 
      &=& R\left( \sum_{i=0}^{n-1} H^iMH^i\right)S + A \\
      &=& \sum_{i=0}^{n-1} H^iRMSH^i + A \\
      &=& \sum_{i=0}^{n-1} H^i(HAH+M-A)H^i + A \\
      &=& \sum_{i=1}^{n} H^iAH^i +\sum_{i=0}^{n-1}H^iMH^i -\sum_{i=0}^{n-1}H^iAH^i + A \\
      &=& H^nAH^n +\sum_{i=0}^{n-1}H^iMH^i 
      = \sum_{i=0}^{m+n-1}H^iMH^i = K.
  \end{eqnarray*}
\end{proof}

Our approach to finding such matrices $R,S$ is as follows.
We will find polynomials $f,g\in\F_p[t]$ for which
\begin{equation} \label{eq:polyeq}
  f(H)M = \left( HAH + M - A\right)g(H),
\end{equation}
with $g(H)$ invertible. If the multiplicative order of $H$ is $\nu$,
then $f(t)=t^m$ and $g(t)=t^{\nu-m}$ satisfy \eqref{eq:polyeq},
so such polynomials necessarily exist. In that case, it follows
that $R=f(H)$ and $S=g(H)^{-1}$ satisfy the hypothesis of Lemma \ref{lem:RS}.

To find $f,g$ satisfying \eqref{eq:polyeq}, first note that by the
Cayley-Hamilton Theorem, we may assume they each have degree at most $k-1$.
Set $Z=HAH+M-A$, and is then sufficient to solve
\begin{eqnarray} \label{eq:linsys}
  f_0M + f_1HM + \cdots + f_{k-1}H^{k-1}M = g_0Z + g_1ZH + \cdots + g_{k-1}ZH^{k-1}.
\end{eqnarray}
This is a homogeneous linear system of $k^2$ equations in the $2k$ unknowns
$f_0,\ldots,f_{k-1},g_0,\ldots, g_{k-1}$. 
We use standard techniques to produce a basis
for the subspace of solutions in $\F_p^{2k}$.
Then simply choose nonzero solutions at random until one is found for which
the resulting $g(H)=g_0I + g_1H + \cdots + g_{k-1}H^{k-1}$ is
invertible. 
We implemented this attack based on the Python code provided by the authors of
\cite{sh} for the case of $k=3$. After performing 200 experiments for each
of several different primes ranging from $p=17$ to their 2000-bit prime number,
we did not encounter a single case where more than one random choice was necessary.
It's not hard to show that if the dimension of the solution space is 1,
then every nonzero solution yields an invertible $g$; but we do not know
if that dimension is always 1.

\section{Analysis of the attack}
The attack presented here has just four steps:
\begin{enumerate}
  \item Compute $Z$.
  \item Find a basis for the solutions to \eqref{eq:linsys},
  \item Choose a nonzero solution to \eqref{eq:linsys} and determine $R=f(H)$
        and $S=g(H)^{-1}$.
  \item Compute $K=RBS + A$.
\end{enumerate}

  Finding $Z$ requires 2 matrix multiplications and 3 additions,
  for a total of $\O(k^3)$ arithmetic operations in $\F_p$.

  To compute a basis for the solutions to \eqref{eq:linsys},
  one explicitly computes $M, HM, H^2M, \ldots, H^{k-1}M$ using $k-1$ 
  matrix multiplications, and then $Z, ZH, ZH^2,\ldots, ZH^{k-1}$
  using another $k-1$ matrix multiplications, for a total of
  $\O(k^4)$ arithmetic operations in $\F_p$. These $2k$ matrices
  are `flattened' into row vectors to form the $2k\times(k^2+2k)$ matrix 
  consisting of the $2k\times k^2$ left submatrix formed by those row vectors,
  augmented with the $2k\times 2k$ identity matrix.
  This matrix is row-reduced, using $\O(4k^2(k^2+2k)) = \O(k^4)$
  arithmetic operations in $\F_p$.
  A basis for the kernel is then found on the right-hand side of the reduced
  matrix, as the row-vectors whose corresponding left-hand sides are zero.
   Computing and inverting $g(H)$
  requires $\O(k^3)$ more arithmetic operations in $\F_p$.
  Supposing that this must be repeated $t$ times before obtaining
  an invertible $g$, we have a total of $\O(tk^4)$ arithmetic operations
  in $\F_p$, which dominates the time required to compute $K$, so
  this is the total number of $\F_p$ arithmetic operations required,
  with the most expensive operation being inversion in $\F_p$.
  Therefore the process requires no more than $\O(tk^4log^3{p})$ bit operations.
  
  We performed 200 experiments with $k=3$ for each
  of the four primes $p=17,19,135257, p_{2000}$, where $p_{2000}$ is the 2000-bit
  prime used by the authors of \cite{sh} in their sample code.
  We did not encounter a case in which $t>1$.
  So we experimentally conclude that the entire attack requires
  $\O(k^4\log^3{p})$ bit operations, which is less than the cube
  of the input size, and hence polynomial-time.
  On a single core of an i7-5557U 3.10 GHz processor, the time required to
  solve the largest $p_{2000}$ instances was about 0.03 seconds.


\bibliography{bibfile}
\end{document}